\documentclass[12pt, reqno]{amsart}
\usepackage{amsmath, amsthm, amscd, amsfonts, amssymb, graphicx, color, booktabs, multirow, rotating}
\textheight 22.5truecm \textwidth 14.5truecm
\setlength{\oddsidemargin}{0.35in}\setlength{\evensidemargin}{0.35in}

\setlength{\topmargin}{-.5cm}

\newtheorem{theorem}{Theorem}[section]

\newtheorem{proposition}[theorem]{Proposition}

\theoremstyle{definition}
\newtheorem{definition}[theorem]{Definition}
\newtheorem{example}[theorem]{Example}

\newtheorem{problem}[theorem]{Problem}
\theoremstyle{remark}
\newtheorem{remark}[theorem]{Remark}
\numberwithin{equation}{section}

\usepackage{listings}

\begin{document}
\setcounter{page}{1}

\title{Application of the Cartier Operator in Coding Theory}


\author{Vahid Nourozi}
\address{The Klipsch School of Electrical and Computer Engineering, New Mexico State University,
Las Cruces, NM 88003 USA}
\email{nourozi@nmsu.edu}

\thanks{}





\keywords{$a$-number; Cartier operator; Super-singular Curves; Maximal Curves.}

\begin{abstract}
The $a$-number is an invariant of the isomorphism class of the $p$-torsion group scheme. We use the Cartier operator on $H^0(\mathcal{A}_2,\Omega^1)$ to find a closed formula for the $a$-number of the form $\mathcal{A}_2 = v(Y^{\sqrt{q}}+Y-x^{\frac{\sqrt{q}+1}{2}})$ where $q=p^s$ over the finite field $\mathbb{F}_{q^2}$. The application of the computed $a$-number in coding theory is illustrated by the relationship between the algebraic properties of the curve and the parameters of codes that are supported by it.

\end{abstract}

\maketitle
\section{Introduction}

Let $\mathcal{C}$ be a geometrically irreducible, projective, and non-singular algebraic curve defined over the finite field $\mathbb{F}_{\ell}$ of order $\ell$. Let $\mathcal{C}(\mathbb{F}_{\ell})$ denote the set of $\mathbb{F}_{\ell}$-rational points of $\mathcal{C}$. In the study of curves over finite fields, a fundamental problem is the size of $\mathcal{C}(\mathbb{F}_{\ell})$. The very basic result here is the Hasse-Weil bound, which asserts that
$$\mid \# \mathcal{C}(\mathbb{F}_{\ell}) - (\ell +1) \mid \leq 2g \sqrt{\ell},$$
where $g = g(\mathcal{C})$ is the genus of $\mathcal{C}$.

The curve $\mathcal{C}$ is called maximal over $\mathbb{F}_{\ell}$ if the number of elements of $\mathcal{C}(\mathbb{F}_{\ell})$ satisfies
$$\# \mathcal{C}(\mathbb{F}_{\ell})= \ell + 1 + 2g \sqrt{\ell}.$$
We only consider maximal curves of positive genus, and hence $\ell$ will always be a square, say $\ell = q^2$.

In \cite{ih}, Ihara showed that if a curve $\mathcal{C}$ is maximal over $\mathbb{F}_{q^2}$, then
$$g \leq g_1:= \dfrac{q(q-1)}{2}.$$

In \cite{tor} authors showed that 

$$\qquad  \mbox{either} \qquad g \leq g_2 := \lfloor \frac{(q-1)^2}{4} \rfloor \qquad  \mbox{or} \qquad  g_1=\frac{q(q-1)}{2}. $$ 
Ruck and Stichtenoth \cite{stir} showed that, up to  $\mathbb{F}_{q^2}$-isomorphism, there is just one maximal curve over $\mathbb{F}_{q^2}$ of genus  $\frac{q(q-1)}{2}$, namely the so-called Hermitian curve over $\mathbb{F}_{q^2}$, which can be defined by the affine equation
 \begin{equation}\label{xxx1}
y^{q}+y=x^{q+1}.
 \end{equation}

From \cite{hirs} there is a unique maximal curve over $\mathbb{F}_{q^2}$ of genus  $g=\frac{(\sqrt{q}-1)(m-1)}{2}$ with $m= \frac{\sqrt{q}+1}{t}$ and $\sqrt{q} \equiv -1$ mod $t$, which can be defined by the affine equation
 \begin{equation}
 \mathcal{A}_t = v(Y^{\sqrt{q}}+Y-x^{\frac{\sqrt{q}+1}{t}}).
 \end{equation}
 The curve $\mathcal{A}_t$ is an Artin-Schreier curve. Also, $\mathcal{A}_1$ is the affine form of the Hermitian curve (\ref{xxx1}).
 
In this paper we consider the $ \mathcal{A}_2$ of genus $g=\frac{(\sqrt{q}-1)^2}{2}$ given by the affine equation
 \begin{equation}\label{xxx}
 \mathcal{A}_2 = v(Y^{\sqrt{q}}+Y-x^{\frac{\sqrt{q}+1}{2}}).
 \end{equation}

Provided that $q/2$ is a Weierstrass non-gap at some point of the curve. It is easy to see that a maximal curve $\mathcal{C}$ is supersingular since all slopes of its Newton polygon are equal $1/2$. This fact implies that the Jacobin $X:=\mbox{Jac}(\mathcal{C})$ has no $p$-torsion points over $\bar{\mathbb{F}}_{p}$. 
A relevant invariant of the $p$-torsion group scheme of the Jacobian of the curve is the $a$-number.

 Consider the multiplication by $p$-morphism $[p]: X \rightarrow X$, which is a finite flat morphism of degree $p^{2g}$.  It factors as $[p]=V \circ F$. Here, $F: X \rightarrow X^{(p)}$   is the relative Frobenius morphism coming from the $p$-power map on the
structure sheaf, and the Verschiebung morphism  $V: X^{(p)} \rightarrow X$ is the dual of $F$.  The kernel of multiplication-by-$p$ on $X$ is defined by the group of $X[p]$.
The important invariant  is the $a$-number $a(\mathcal{C})$ of curve $\mathcal{C}$ defined by
$$a(\mathcal{C})=\mbox{dim}_{\mathbb{\bar{F}}_p} \mbox{Hom}(\alpha_{p}, X[p]),$$
where $\alpha_p$ is the kernel of the Frobenius endomorphism on the group scheme $\mbox{Spec}(k[X]/(X^p))$. Another definition for the $a$-number is
$$ a(\mathcal{C}) = \mbox{dim}_{\mathbb{F}_p}(\mbox{Ker}(F) \cap \mbox{Ker}(V)).$$


 A few results on the rank of the Cartier operator (especially $a$-number) of curves are introduced by Kodama and Washio \cite{13}, González \cite{8}, Pries and Weir \cite{17}, Yui \cite{Yui} and Montanucci and Speziali \cite{maria}. Also, I introduced the rank of the Cartier of the maximal curves in \cite{esfahan}, maximal curves with the third largest genus in \cite{aut}, the hyperelliptic curve in \cite{shiraz}, $\mathbb{F}_{q^2}$ maximal function fields in \cite{misori}, Picard Curve in \cite{picard}, and explained them in my PhD dissertation in \cite{phd}.

In this paper, we determine the $a$-number of certain maximal curves. In the case $g=g_1$, the $a$-number of
the Hermitian curves is computed by Gross in \cite{10}.  Here, we compute the $a$-number of maximal curves over $\mathbb{F}_{q^2}$  with genus $g=g_2$  for infinitely many values of $q$.
 
 In Section \ref{202}, we prove that the $a$-number of the curve $\mathcal{A}_2$  with Equation (\ref{xxx}) is $a(\mathcal{A}_2) = \frac{p-1}{8}(\sqrt{p^{s-2}}+1)(\sqrt{q}-1)$, see Theorem \ref{thex}. Finally, we provided an example of the $a$-number of $\mathcal{A}_2$ with the MAGMA code. The proofs use directly the action of the Cartier operator on $H^0(\mathcal{A}_2, \Omega^1)$.

\section{The Cartier operator}
Let $k$ be an algebraically closed field of characteristic $p>0$.
Let $\mathcal{C}$ be a
curve defined over $k$.
The Cartier operator is a $1/p$-linear operator acting on the sheaf $\Omega^1:=\Omega^1_{\mathcal{C}}$ of differential forms on $\mathcal{C}$ in positive characteristic $p>0$.

 Let $K=k(\mathcal{C})$ be the function field of the curve $\mathcal{C}$ of genus $g$ defined over  $k$. A separating variable for $K$ is an element $x \in K \setminus K^p$.

\begin{definition}
  (The Cartier operator). Let $\omega \in  \Omega_{K/K_q}$. There exist $f_0,\cdots, f_{p-1}$ such
that $\omega = (f^p_0 + f^p_1x +\cdots + f^p_{p-1}x^{p-1})dx$. The Cartier operator $\mathfrak{C}$ is defined by
$$\mathfrak{C}(\omega) := f_{p-1}dx.$$
The definition does not depend on the choice of $x$ (see \cite[Proposition 1]{100}).
\end{definition}
We refer the reader to \cite{100,30,40,150} for the proofs of the following statements.

\begin{proposition}
  (Global Properties of $\mathfrak{C}$). For all $\omega \in  \Omega_{K/K_q}$ and all $f \in K$,

  \begin{itemize}
    \item $\mathfrak{C}(f^p\omega) = f\mathfrak{C}(\omega)$;
    \item $\mathfrak{C}(\omega) = 0 \Leftrightarrow \exists h \in K, \omega = dh$;
    \item $\mathfrak{C}(\omega) = \omega \Leftrightarrow \exists h \in K, \omega = dh/h$.
  \end{itemize}
\end{proposition}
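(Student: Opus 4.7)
The plan is to exploit the fact that, for any separating variable $x \in K \setminus K^p$, the set $\{1, x, \ldots, x^{p-1}\}$ is a basis of $K$ as a $K^p$-vector space. Every $\omega \in \Omega_{K/K_q}$ therefore admits a unique expansion $\omega = (f_0^p + f_1^p x + \cdots + f_{p-1}^p x^{p-1})\,dx$, and the definition of $\mathfrak{C}$ can be applied coefficient by coefficient.

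For the first bullet, I would simply multiply the expansion of $\omega$ by $f^p$, redistribute as $(ff_i)^p$, and read off the coefficient of $x^{p-1}$; this gives $f \cdot f_{p-1}\,dx = f\,\mathfrak{C}(\omega)$. For the second bullet, the direction $(\Leftarrow)$ is constructive: given $\omega$ with $f_{p-1}=0$, each summand $f_i^p x^i\,dx$ with $0 \le i \le p-2$ equals $d\bigl(f_i^p x^{i+1}/(i+1)\bigr)$, since $p = 0$ annihilates the $d(f_i^p)$ contribution and $i+1$ is invertible modulo $p$ precisely in this range; summing produces an antiderivative $h$. For $(\Rightarrow)$, I would expand $h = \sum_{j=0}^{p-1} a_j^p x^j$ in the same basis, compute $dh = \sum_{j=1}^{p-1} j a_j^p x^{j-1}\,dx$ (the derivatives of the $a_j^p$ vanish), and observe that no $x^{p-1}$ term appears, so $\mathfrak{C}(dh) = 0$.

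The third bullet is the substantive one. For $(\Leftarrow)$, the key computation is $\mathfrak{C}(x^{p-1}\,dx) = dx$, read off directly from the definition. When $h \notin K^p$ I take $h$ itself as the separating variable and write $dh/h = (1/h)\,dh = (1/h)^p \cdot h^{p-1}\,dh$; the first bullet then yields $\mathfrak{C}(dh/h) = (1/h)\,\mathfrak{C}(h^{p-1}\,dh) = dh/h$. The case $h \in K^p$ is trivial since $dh = 0$. The direction $(\Rightarrow)$ is where I expect the real difficulty: extracting a logarithmic representative from a bare fixed-point condition is not a formal manipulation. The standard route is the Cartier/Artin--Schreier exact sequence
\[
1 \to k^\times \to K^\times \xrightarrow{d\log} \Omega^1_K \xrightarrow{\mathfrak{C} - \mathrm{id}} \Omega^1_K,
\]
whose exactness at the middle $\Omega^1_K$ is precisely equivalent to the claim. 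Establishing that exactness requires an Artin--Schreier-type descent argument to construct a global $h$ with $dh/h = \omega$ out of the data $\mathfrak{C}(\omega) = \omega$, and I would invoke one of the cited sources \cite{100,30,40,150} for this step rather than reproduce it in full.
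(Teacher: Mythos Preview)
Your argument is correct in substance. Note that the paper itself gives no proof of this proposition at all: the sentence immediately preceding it reads ``We refer the reader to \cite{100,30,40,150} for the proofs of the following statements.'' So there is no paper proof to compare against; you have in fact supplied strictly more than the paper does, giving full arguments for the first two bullets and for the easy direction of the third, and deferring to the same references only for the genuinely nontrivial implication $\mathfrak{C}(\omega)=\omega \Rightarrow \omega=dh/h$.

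One minor slip to fix: in your treatment of the second bullet you have interchanged the labels $(\Leftarrow)$ and $(\Rightarrow)$. The statement is $\mathfrak{C}(\omega)=0 \Leftrightarrow \omega=dh$, so ``$(\Leftarrow)$'' should be the computation that $\mathfrak{C}(dh)=0$ from the expansion $h=\sum a_j^p x^j$, while ``$(\Rightarrow)$'' is the construction of an antiderivative when $f_{p-1}=0$. The mathematics in each paragraph is fine; only the tags are swapped.
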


\begin{remark}\label{remark}
Moreover, one can easily show that
\begin{equation*}
\mathfrak{C}(x^j \, dx) = \begin{cases}
    0 & \text{if } \hspace{0.4cm} p \nmid j+1, \\
    x^{s-1} \, dx & \text{if } \hspace{0.4cm} j+1=ps.
\end{cases}
\end{equation*}
\end{remark}






If $\mbox{div}(\omega)$ is effective, then differential $\omega$ is holomorphic. The holomorphic differentials set $H^0(\mathcal{C}, \Omega ^1)$ is a $g$-dimensional $k$-vector subspace of $\Omega^1$ such that $\mathfrak{C}(H^0(\mathcal{C}, \Omega^1))$ is the same as $H^0(\mathcal{C}, \Omega^1)$. Any curve $\mathcal{C}$ has a $a$-number that is equal to the size of the Cartier operator's kernel $H^0(\mathcal{C}, \Omega^1)$, or to put it another way, it is the size of the space of exact holomorphic differentials on $\mathcal{C}$ (see \cite[5.2.8]{14}).

The following theorem is due to Gorenstein; see \cite[Theorem 12]{9}.
\begin{theorem}\label{2.2}
  A differential $\omega \in \Omega^1$ is holomorphic if and only if it is of the form $(h(x, y)/F_y)dx$, where $H:h(X, Y) =0$ is a canonical adjoint.
\end{theorem}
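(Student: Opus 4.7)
The plan is to identify the specific rational differential $\omega_0 = dx/F_y$ on $\mathcal{C}$, compute its divisor on the smooth projective model, and then show that multiplication by polynomials $h(x,y)$ satisfying the adjoint conditions precisely parametrizes $H^0(\mathcal{C}, \Omega^1)$. First, I would verify that $\omega_0$ is a well-defined nonzero rational differential: differentiating the relation $F(x,y) = 0$ along the curve yields $F_x\,dx + F_y\,dy = 0$, so away from singular points one has $\omega_0 = dx/F_y = -dy/F_x$, and either expression serves as a local expression depending on which partial derivative is nonzero at a given smooth point.

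Next, I would handle the smooth case as a warm-up. For a smooth plane curve of degree $d$ in $\mathbb{P}^2$, the adjunction formula gives $\omega_{\mathcal{C}} \cong \mathcal{O}_{\mathcal{C}}(d-3)$, and a direct local computation of $\mathrm{div}(\omega_0)$ at affine points and at points over the line at infinity shows that $\mathrm{div}(\omega_0)$ equals $-(d-3)$ times a hyperplane section. Consequently, every holomorphic differential is of the form $(h/F_y)\,dx$ with $h \in k[x,y]$ of total degree at most $d-3$, yielding a space of dimension $\binom{d-1}{2} = g$, as expected.

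For the general case, I would work locally at each singular point $P$ of the affine plane model. Pulling $\omega_0$ back to the normalization at every branch above $P$ produces poles whose orders are controlled by the different (equivalently, by the conductor) of the local ring; here the key input is that plane curve singularities are Gorenstein, so the conductor coincides with the ideal that the dualizing module annihilates, which is precisely the adjoint ideal at $P$. A rational expression $(h/F_y)\,dx$ therefore extends holomorphically across every singular point if and only if $h$ lies in the adjoint ideal at each such point. Combining the local adjoint conditions with the global degree bound $\deg h \le d-3$ is exactly the classical definition of $H : h(X,Y) = 0$ being a canonical adjoint, which establishes both implications. Conversely, since the resulting subspace of $\Omega^1$ has dimension equal to $g$ by a Riemann--Roch count involving the conductor degree, the map from canonical adjoints to $H^0(\mathcal{C},\Omega^1)$ is surjective.

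The main obstacle is the local analysis at singular points, namely the identification of the conductor ideal with the adjoint ideal via the Gorenstein property. One must match, branch by branch above $P$, the pole orders of $\omega_0$ on the normalization with the multiplicities prescribed by the adjoint divisor, and then ensure the dimension count closes exactly. Once this local step is carried out and the behavior of $\omega_0$ over the line at infinity is recorded, the rest of the argument is essentially bookkeeping on divisors and degrees.
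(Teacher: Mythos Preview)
The paper does not actually prove this theorem: it is quoted from Gorenstein's 1952 paper and the reader is referred to \cite[Theorem~12]{9} for the proof. There is therefore no argument in the present paper to compare against.

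That said, your outline is a faithful sketch of the classical route that Gorenstein himself takes. The identification $dx/F_y = -dy/F_x$, the smooth-case computation via adjunction giving the degree bound $d-3$, and the reduction at singular points to the equality of the conductor ideal with the adjoint ideal (using that plane curve singularities are Gorenstein) are exactly the ingredients of the original proof. The point you flag as the main obstacle---matching the pole orders of $dx/F_y$ on the normalization with the adjoint conditions branch by branch---is indeed where the work lies, and Gorenstein's paper is largely devoted to setting up the local machinery (his ``arithmetic theory of adjoints'') that makes this precise. Your plan would produce a correct proof once that local step is carried out in detail; just be aware that in a write-up you would either need to redo a fair amount of Gorenstein's local analysis or cite it.
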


\begin{theorem}\cite{maria}\label{2.3}
  With the above assumptions,
\begin{equation*}
\mathfrak{C}(h\dfrac{dx}{F_{y}}) = (\dfrac{\partial^{2p-2}}{\partial x^{p-1}\partial y^{p-1}}(F^{p-1}h))^{\frac{1}{p}}\dfrac{dx}{F_{y}}
\end{equation*}
for any $h \in K(\mathcal{X})$.
\end{theorem}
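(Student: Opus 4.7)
The plan is to reduce the computation of $\mathfrak{C}(h\,dx/F_y)$ to a direct application of the monomial formula in Remark~\ref{remark}, by exploiting the semilinearity relation $\mathfrak{C}(f^p\omega)=f\,\mathfrak{C}(\omega)$ together with the fact that, on the curve, two polynomials agreeing modulo $F$ define the same element of $K(\mathcal{X})$. First, I would rewrite
\begin{equation*}
h\,\frac{dx}{F_y}=\frac{hF_y^{p-1}}{F_y^{p}}\,dx,
\end{equation*}
and pull $1/F_y^{p}=(1/F_y)^{p}$ outside the Cartier operator to obtain $\mathfrak{C}(h\,dx/F_y)=(1/F_y)\,\mathfrak{C}(hF_y^{p-1}\,dx)$. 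The whole statement then reduces to computing $\mathfrak{C}(hF_y^{p-1}\,dx)$.

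The heart of the argument is the polynomial congruence
\begin{equation*}
\frac{\partial^{p-1}}{\partial y^{p-1}}\bigl(F^{p-1}h\bigr)\;\equiv\;-\,F_y^{p-1}\,h\pmod{F}\quad\text{in }k[x,y].
\end{equation*}
To prove it, I would iterate the Leibniz rule on $F^{p-1}h = F\cdot F\cdots F\cdot h$: every summand in which the $p-1$ differentiations do not each hit a distinct factor of the product retains at least one undifferentiated factor of $F$, while the unique surviving term is $(p-1)!\,F_y^{p-1}\,h$, and Wilson's theorem gives $(p-1)!\equiv -1\pmod{p}$. Since $hF_y^{p-1}$ and $-\partial^{p-1}(F^{p-1}h)/\partial y^{p-1}$ differ by a multiple of $F$, they represent the same rational function on $\mathcal{X}$, so the Cartier operator cannot distinguish them.

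Finally, writing $F^{p-1}h=\sum_{m,n}c_{mn}x^m y^n$, the same Wilson-type argument shows that $\partial^{p-1}/\partial y^{p-1}$ kills every monomial with $n\not\equiv -1\pmod{p}$ and sends $x^m y^{n'p-1}$ to $-x^m y^{(n'-1)p}$; the resulting $y$-exponents are $p$-th powers, so for each $n'\ge 1$ I can pull $y^{n'-1}$ out of the Cartier by semilinearity and apply Remark~\ref{remark} in the $x$-variable. This selects exactly the monomials with $m\equiv -1\pmod{p}$ and produces
\begin{equation*}
\mathfrak{C}(hF_y^{p-1}\,dx)=\left(\sum_{m',n'\ge 1}c_{m'p-1,\,n'p-1}^{1/p}\,x^{m'-1}y^{n'-1}\right)dx=\left(\frac{\partial^{2p-2}}{\partial x^{p-1}\partial y^{p-1}}(F^{p-1}h)\right)^{1/p}\!dx,
\end{equation*}
where the second equality is again Wilson's theorem, identifying the mixed iterated derivative with the $p$-adic coefficient-extraction map. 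Dividing by $F_y$ yields the claimed formula. The main obstacle is the polynomial congruence above: one has to check carefully that only the ``fully distributed'' Leibniz term escapes multiplication by $F$, which is really a combinatorial bookkeeping step in expanding $\partial^{p-1}/\partial y^{p-1}$ applied to a $(p-1)$-fold product.
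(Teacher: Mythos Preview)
The paper does not actually prove this statement; it is quoted from \cite{maria} as an input, so there is no ``paper's proof'' to compare against. Your argument is correct and is in fact the standard one (going back to St\"ohr--Voloch): pull $1/F_y$ out by $p^{-1}$-linearity, use the Leibniz/Wilson congruence
\[
\frac{\partial^{p-1}}{\partial y^{p-1}}(F^{p-1}h)\equiv (p-1)!\,F_y^{p-1}h\equiv -F_y^{p-1}h\pmod{F}
\]
to replace $hF_y^{p-1}$ on the curve by a polynomial whose $y$-exponents are all multiples of $p$, and then finish with the monomial rule in Remark~\ref{remark} after extracting the $p$-th powers of $y$. The two applications of Wilson's theorem produce signs that cancel, matching the sign in the stated formula and in the paper's coefficient-extraction description \eqref{123321} of $\nabla$.

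One small point worth making explicit: the theorem is stated for arbitrary $h\in K(\mathcal{X})$, but your expansion $F^{p-1}h=\sum_{m,n}c_{mn}x^my^n$ tacitly treats $h$ as a polynomial in $x,y$. This is harmless---any $h$ can be written as $P/Q^p$ with $P,Q\in k[x,y]$, and the factor $1/Q^p$ passes through $\mathfrak{C}$ as $1/Q$ on both sides---but it is a reduction you should state rather than leave implicit.
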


The differential operator $\nabla$ is defined by
$$\nabla = \dfrac{\partial^{2p-2}}{\partial x^{p-1} \partial y^{p-1}},$$
has the property
\begin{equation}\label{123321}
  \nabla(\sum_{i,j} c_{i,j}X^iY^j)= \sum_{i,j} c_{ip+p-1,jp+p-1}X^{ip}Y^{jp}.
\end{equation}

\section{The $a$-number of Curve $\mathcal{A}_2$}\label{202}

In this section, we consider the curve $\mathcal{A}_2$ is given by the equation  $y^{\sqrt{q}}+y=x^{\frac{\sqrt{q}+1}{2}}$ of genus $g(\mathcal{A}_2)= \frac{(\sqrt{q}-1)^2}{4}$, with $q=p^s$ and $p>2$ over $\mathbb{F}_{q^2}$, where $\sqrt{q} \equiv -1$ mod $2$ . From Theorem \ref{2.2}, one can find a basis for the space $H^0(\mathcal{A}_2, \Omega^1)$ of holomorphic differentials on $\mathcal{A}_2$, namely
$$\mathcal{B} = \{ x^iy^jdx \mid  \frac{\sqrt{q}+1}{2}i+ \sqrt{q}j \leq 2g-2 \}.$$

 \begin{proposition}\label{111}
   The rank of the Cartier operator $\mathfrak{C}$ on the curve $\mathcal{A}_2$ equals the number of pairs $(i, j)$ with $\frac{\sqrt{q}+1}{2}i+ \sqrt{q}j \leq 2g-2$ such that the system of congruences mod $p$
\begin{equation}\label{12}
\Bigg\{
             \begin{array}{c}
              k\sqrt{q} + h - k + j \equiv 0,\\
              (p - 1 - h)(\frac{(\sqrt{q}+1)}{2}) + i \equiv p-1, \\
             \end{array}
\end{equation}
has a solution $(h, k)$ for $0 \leq h \leq \frac{p-1}{2}, 0 \leq k\leq h$.
 \end{proposition}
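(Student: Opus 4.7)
The plan is to apply Theorem \ref{2.3} directly to each basis element $x^i y^j \, dx$ of $H^0(\mathcal{A}_2, \Omega^1)$ and characterize when its image under $\mathfrak{C}$ is a nonzero monomial. A preliminary simplification: since $\sqrt{q} = p^{s/2}$ is a positive power of $p$, we have $\sqrt{q} \equiv 0 \pmod p$, so $F_y = \sqrt{q}\, y^{\sqrt{q}-1} + 1$ reduces to $1$, and Theorem \ref{2.3} becomes
\begin{equation*}
\mathfrak{C}(x^i y^j \, dx) = \bigl(\nabla(F^{p-1} x^i y^j)\bigr)^{1/p} dx.
\end{equation*}
The rank of $\mathfrak{C}$ will then equal the number of pairs $(i,j)$ for which the right-hand side is nonzero, provided these images are linearly independent in $H^0(\mathcal{A}_2, \Omega^1)$.

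The core computation is to expand $F^{p-1}$ trinomially. Writing $k$ for the exponent drawn from $y^{\sqrt{q}}$ and $h-k$ for the exponent drawn from $y$ (so $p-1-h$ is the exponent of $-x^{(\sqrt{q}+1)/2}$), one obtains
\begin{equation*}
F^{p-1} = \sum_{0 \leq k \leq h \leq p-1} (-1)^{p-1-h} \binom{p-1}{h}\binom{h}{k}\, x^{(p-1-h)(\sqrt{q}+1)/2}\, y^{k\sqrt{q}+h-k}.
\end{equation*}
Multiplying by $x^i y^j$ and invoking Equation \eqref{123321}, $\nabla$ picks out exactly the monomials whose $x$- and $y$-exponents satisfy the prescribed divisibility modulo $p$. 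Reducing these two conditions and using $\sqrt{q} \equiv 0 \pmod p$ yields the system \eqref{12}. Since $\binom{p-1}{h}\binom{h}{k}$ is nonzero modulo $p$ for every $0 \leq k \leq h \leq p-1$, each admissible $(h,k)$ contributes a genuinely nonzero monomial, so solvability of \eqref{12} is equivalent to $\mathfrak{C}(x^i y^j \, dx) \neq 0$. The tighter range $0 \leq h \leq (p-1)/2$ is expected to come from combining the congruences with the holomorphic bound $\frac{\sqrt{q}+1}{2}i + \sqrt{q}j \leq 2g-2$, which eliminates the symmetric half of the index range either by duplication or by forcing the output exponents out of the admissible window.

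The main obstacle is passing from the count of admissible $(i,j)$ to the actual rank of $\mathfrak{C}$. Each admissible pair produces a monomial image $x^I y^J \, dx$ with $(I,J)$ determined by $Ip + (p-1) = (p-1-h)(\sqrt{q}+1)/2 + i$ and $Jp + (p-1) = k\sqrt{q}+h-k+j$. Verifying that distinct admissible pairs $(i,j)$ produce distinct targets $(I,J)$ in $\mathcal{B}$, so that the nonzero images are linearly independent despite the $1/p$-semilinearity of $\mathfrak{C}$, is the technical heart of the argument and is what ultimately identifies the number of admissible pairs with $\mathrm{rank}(\mathfrak{C})$.
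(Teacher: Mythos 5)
Your outline follows the paper's proof almost step for step: apply Theorem \ref{2.3}, expand $F^{p-1}x^iy^j$ trinomially with the parametrization $(k,\,h-k,\,p-1-h)$, and read off from \wzor{123321} that $\nabla$ survives exactly when the two exponent congruences hold; your observations that $F_y\equiv 1$ (so the basis really is $x^iy^j\,dx$) and that $\binom{p-1}{h}\binom{h}{k}\not\equiv 0 \pmod p$ are correct and, if anything, slightly more careful than the paper on the nonvanishing of the surviving coefficients.

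The genuine gap is the step you yourself flag as ``the technical heart'' and then do not carry out: showing that the nonzero images attached to distinct admissible pairs $(i,j)$ and $(i_0,j_0)$ are linearly independent. Without this, the count of admissible pairs is only an upper bound for the rank, so the proposition is not proved. The paper closes this by an explicit exponent comparison: it supposes some $(h,k)$ and $(h_0,k_0)$ produce equal $x$- and $y$-exponents, i.e.
\begin{equation*}
k\sqrt{q}+h-k+j=k_0\sqrt{q}+h_0-k_0+j_0, \qquad (p-1-h)\tfrac{\sqrt{q}+1}{2}+i=(p-1-h_0)\tfrac{\sqrt{q}+1}{2}+i_0,
\end{equation*}
and derives a contradiction from the size constraints imposed by membership in $\mathcal{B}$: if $h=h_0$ the second equation forces $i=i_0$, hence $j\neq j_0$ and $k\neq k_0$, but then $|j-j_0|=(\sqrt{q}-1)|k-k_0|\geq\sqrt{q}-1$ exceeds the bound on $j-j_0$ coming from $\frac{\sqrt{q}+1}{2}i+\sqrt{q}j\leq 2g-2$; symmetrically, $k=k_0$ forces $|i-i_0|\geq\frac{\sqrt{q}+1}{2}(h-h_0)$, again too large. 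You need some version of this (or another argument handling the $1/p$-semilinearity) to conclude. One further point: your explanation of why $h$ may be restricted to $0\leq h\leq\frac{p-1}{2}$ is only a conjecture (``is expected to come from\ldots''); the paper likewise asserts this range without deriving it, so if you keep it you should either justify it or run the argument over the full range $0\leq h\leq p-1$, which is what the paper's independence step actually uses.
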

\begin{proof}
By Theorem \ref{2.3}, $\mathfrak{C}((x^iy^j/F_y)dx) =(\nabla(F^{p-1}x^iy^j))^{1/p}dx/F_y$. So, we apply the differential operator $\nabla$ to
\begin{equation}\label{333}
  (y^{\sqrt{q}}+y-x^{\frac{\sqrt{q}+1}{2}})^{p-1}x^iy^j = \sum_{h=0}^{p-1}\sum_{k=0}^{h} (^{p-1}_h)(^{h}_{k})(-1)^{h-k}x^{ (p - 1 - h)(\frac{(\sqrt{q}+1)}{2}) + i}y^{k\sqrt{q} + h - k + j}
\end{equation}
for each $i, j$ such that $\frac{\sqrt{q}+1}{2}i+ \sqrt{q}j \leq 2g-2$.

From the Formula (\ref{123321}), $\nabla(y^{\sqrt{q}}+y+x^{\frac{\sqrt{q}+1}{2}})^{p-1}x^iy^j \neq 0$ if and only if for some $(h, k)$, with $0 \leq h \leq \frac{p-1}{2}$ and $0 \leq k\leq h$, satisfies both the following congruences mod $p$:
\begin{equation}\label{444}
\Bigg\{
             \begin{array}{c}
              k\sqrt{q} + h - k + j \equiv 0,\\
              (p - 1 - h)(\frac{(\sqrt{q}+1)}{2}) + i \equiv p-1. \\
             \end{array}
\end{equation}
Take $(i, j) \neq(i_0, j_0)$ in this situation both $\nabla(y^{\sqrt{q}}+y+x^{\frac{\sqrt{q}+1}{2}})^{p-1}x^iy^j$ and $\nabla(y^{\sqrt{q}}+y+x^{\frac{\sqrt{q}+1}{2}})^{p-1}x^{i_0}y^{j_0}$ are nonzero. We claim that they are linearly independent over $k$. To show independence, we prove that, for each $(h, k)$ with $0 \leq h \leq p -1$ and $0 \leq k\leq h$ there is no $(h_0, k_0)$ with $0 \leq h_0\leq p -1$ and $0 \leq k_0\leq h_0$ such that
\begin{equation}\label{555}
  \Bigg\{
             \begin{array}{c}
             k\sqrt{q} + h - k + j = k_0\sqrt{q} + h_0 - k_0 + j_0,\\
              (p - 1 - h)(\frac{(\sqrt{q}+1)}{2}) + i = (p - 1 - h_0)(\frac{(\sqrt{q}+1)}{2}) + i_0.\\
             \end{array}
\end{equation}
If $h=h_0$, then $j \neq j_0$ by $i=i_0$ from the second equation, therefore $k \neq k_0$. We may assume $k >k_0$. Then $j -j_0=(\sqrt{q}-1)(k -k_0)>\sqrt{q}-1$, a contradiction as $j - j_0 \leq \frac{(\sqrt{q}-1)^2-4}{2\sqrt{q}}$. Similarly, if $k=k_0$, then $h \neq h_0$ by $(i, j) \neq(i_0, j_0)$. We assume that $h>h_0$. Then $i-i_0 = \frac{\sqrt{q}+1}{2}(h-h_0)>\frac{\sqrt{q}+1}{2}$, a contradiction as $i-i_0 \leq \frac{(\sqrt{q}-1)^2-4}{\sqrt{q}+1}$.
\end{proof}

For convenience, we signify the matrix representing the $p$-th power of the Cartier operator $\mathfrak{C}$ with $A_s:=A(\mathcal{A}_2)$ on the curve $\mathcal{A}_2$ for the basis $\mathcal{B}$, where $q=p^s$.
\begin{theorem}\label{thex}
  If $q =p^s$ for $s \geq 2$, $s$ be even and $p>2$, then the $a$-number of the curve $\mathcal{X}$ equals
  $$\dfrac{p-1}{8}(\sqrt{p^{s-2}}+1)(\sqrt{q}-1).$$
\end{theorem}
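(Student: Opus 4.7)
My plan is to compute $a(\mathcal{A}_2) = g(\mathcal{A}_2) - \mathrm{rank}(\mathfrak{C})$, where $g(\mathcal{A}_2) = (\sqrt{q}-1)^2/4$ and Proposition~\ref{111} expresses $\mathrm{rank}(\mathfrak{C})$ as the number of pairs $(i,j) \in \mathcal{B}$ for which the congruence system (\ref{12}) admits an admissible solution $(h,k)$.

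First I would exploit the hypothesis that $s \ge 2$ and $s$ is even: $\sqrt{q} = p^{s/2}$ is then divisible by $p$, so every $\sqrt{q}$-contribution in (\ref{12}) vanishes modulo $p$, and since $p > 2$ one also has $(\sqrt{q}+1)/2 \equiv 2^{-1} \pmod{p}$. The two congruences decouple and pin $i \bmod p$ as a linear function of $h$ alone and $j \bmod p$ as a linear function of $(h,k)$. A short injectivity check, using that the admissible range of $h$ fits inside a complete residue system modulo $p$, shows that distinct admissible $(h, k)$ produce distinct residue classes $(i_0, j_0) \in \{0,\dots,p-1\}^2$. Consequently the count of solvable $(i,j) \in \mathcal{B}$ splits as a sum, indexed by admissible $(h, k)$, of the number of elements of $\mathcal{B}$ in the residue class $(i_0, j_0)$ forced by that pair.

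Next, for each such residue class I would set $i = ap + i_0$ and $j = bp + j_0$, then count the pairs $(a,b)$ with $a, b \ge 0$ satisfying $\frac{\sqrt{q}+1}{2}(ap+i_0) + \sqrt{q}(bp+j_0) \le 2g-2$ and $bp+j_0 \le \sqrt{q}-1$. Because $\sqrt{q}$ and $\sqrt{q}+1$ share the factor $p^{s/2-1} = \sqrt{p^{s-2}}$ in the relevant quotients, these counts admit closed forms in $\sqrt{p^{s-2}}$. Summing over all admissible $(h, k)$ yields $\mathrm{rank}(\mathfrak{C})$, and subtracting from $g$ should simplify to the claimed $\frac{p-1}{8}(\sqrt{p^{s-2}}+1)(\sqrt{q}-1)$.

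The main difficulty will be the bookkeeping of the per-residue lattice counts: the number of points depends on how close $(i_0, j_0)$ sits to the boundary of the polytope defining $\mathcal{B}$, so the summands do not individually admit a uniform closed form. I plan to organize the calculation by summing first over $k$ at fixed $h$, which is a short arithmetic progression, and then over $h$ using a geometric-progression argument keyed on $p^{s/2-1}$. If this direct route becomes unwieldy, a fall-back is to set up a recursion linking the matrix $A_s$ representing $\mathfrak{C}$ on $\mathcal{B}$ to a smaller matrix $A_{s-2}$ via the identity $q = p^2 \cdot p^{s-2}$, and induct on the even parameter $s$. I would finally corroborate the closed form against the MAGMA example promised in the introduction for a small instance such as $p=3,\ s=4$.
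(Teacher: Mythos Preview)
Your plan is sound and shares its opening moves with the paper: compute $a=g-\operatorname{rank}(\mathfrak{C})$, invoke Proposition~\ref{111}, and simplify system~(\ref{12}) using $\sqrt{q}\equiv 0\pmod p$ and $(\sqrt{q}+1)/2\equiv 2^{-1}\pmod p$, exactly as in the paper's reduced system~(\ref{14}). Your injectivity claim for $(h,k)\mapsto(i_0,j_0)$ is correct, since $i_0$ is an affine function of $h$ alone with invertible slope and $j_0\equiv k-h$ separates the $k$'s for fixed $h$.

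Where you diverge is in the counting strategy. Your primary route sums, over the admissible $(h,k)$, a per--residue-class lattice count inside the triangle defining $\mathcal{B}$. The paper instead takes what you list as your fall-back: it establishes the base cases $s=2$ (where $\operatorname{rank}=0$) and $s=4$ by direct inspection, and for $s\ge 6$ computes the increment $\operatorname{rank}(A_s)-\operatorname{rank}(A_{s-2})$ by counting solvable $(i,j)$ only in the annulus $\tfrac{(\sqrt{p^{s-2}}-1)^2}{4}<\tfrac{\sqrt{q}+1}{2}i+\sqrt{q}\,j\le \tfrac{(\sqrt{q}-1)^2}{4}$, then closes the recursion by induction on even $s$ to obtain $\operatorname{rank}(\mathfrak{C})=\tfrac{p+1}{8}(\sqrt{q}-1)(\sqrt{p^{s-2}}-1)$. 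Your direct approach is conceptually cleaner and avoids the inductive scaffolding, but, as you anticipate, the per-class counts are genuinely nonuniform near the hypotenuse of $\mathcal{B}$, so the bookkeeping is heavier; the paper's recursion trades that for counting in a thin strip where the ranges of the auxiliary parameters $l,m$ are easier to pin down. Either way the final subtraction from $g=(\sqrt{q}-1)^2/4$ is the same algebra the paper records.
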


\begin{proof}
First we prove that, if $q=p^s, s \geq 2$ and be even, then $\mbox{rank}(A_s) =\dfrac{p+1}{8}(\sqrt{p^s} - 1)(\sqrt{p^{s-2}}-1)$. In this case, $1 \leq \frac{\sqrt{q}+1}{2}i+ \sqrt{q}j \leq g$ and System (\ref{12}) mod $p$ reads
  \begin{equation}\label{14}
\Bigg\{
             \begin{array}{c}
              h-k+j \equiv 0,\\
              -\frac{h}{2} - \frac{1}{2} + i \equiv p-1. \\
             \end{array}
\end{equation}
First assume that $s=2$, for $\sqrt{q} =p$, we have $\frac{p+1}{2}i+ pj \leq g$ and System (\ref{14}) becomes
  \begin{equation*}
\Bigg\{
             \begin{array}{c}
              j = k-h,\\
              i = p + \frac{h}{2} - \frac{1}{2},\\
             \end{array}
\end{equation*}
in this case $\frac{p+1}{2}i+ pj \leq g$ that is, $\frac{h(1-3p)}{4} + kp \leq \frac{-p^2 -3p +2}{4}$ then $h \geq \frac{-p^2 -3p +2}{1-3p}$, thus $h \geq \frac{3p+10}{9}$ a contradiction by Proposition \ref{111}. As a consequence, there is no pair $(i, j)$ for which the above system admits a solution $(h, k)$. Thus, $\mbox{rank}(A_1) =0$.

Let $s=4$, so $\sqrt{q} =p^2$. For $\frac{p^2+1}{2}i+ p^2j \leq g$, the above argument still works. Therefore, $\frac{(p-1)^2}{4} + 1 \leq \frac{p^2+1}{2}i+ p^2j \leq \frac{(p^2-1)^2}{4}$ and our goal is to determine for which $(i, j)$ there is a solution $(h, k)$ of the system mod $p$
  \begin{equation*}
\Bigg\{
             \begin{array}{c}
             h - k +j \equiv 0,\\
              -\frac{h}{2} - \frac{1}{2} + i \equiv p-1. \\
             \end{array}
\end{equation*}
Take $l, m \in Z^+_0$ so that
  \begin{equation*}
\Bigg\{
             \begin{array}{c}
              j=lp+k-h,\\
              i= mp+p+\frac{h}{2} - \dfrac{1}{2}.\\
             \end{array}
\end{equation*}

In this situation, $i < \frac{2g}{p^2+1}=\frac{(p^2-1)^2}{2(p^2+1)}$, so $mp+p+\frac{h}{2}-\frac{1}{2} \leq \frac{(p^2-1)^2}{2(p^2+1)}$. Then $m\leq \frac{(p^2-1)^2}{2(p^2+1)}$. And $j < \frac{(p^2-1)^2}{4p^2}$, so $lp + k-h<\frac{(p^2-1)^2}{4p^2}$, Then $l<\frac{(p^2-1)^2}{4p^2}$. From this we can say that $\frac{p^2-1}{4} -1 \leq l \leq \frac{p^2-1}{4}$, and $\frac{p^2-1}{2}-1 \leq m\leq \frac{p^2-1}{2}$. In this way, $\dfrac{(p^2-1)^2}{8}$ suitable values for $(i, j)$ are obtained, whence $\mbox{rank}(A_2) =\dfrac{(p^2-1)^2}{8}$.


 For $s \geq 6$, $\mbox{rank}(A_s)$ equals $\mbox{rank}(A_{s-1})$ plus the number of pairs $(i, j)$ with $\frac{(\sqrt{p^{s-2}} -1)^2}{4}+1\leq \frac{\sqrt{q}+1}{2}i+ \sqrt{q}j \leq \frac{(\sqrt{p^{s}} -1)^2}{4}$ such that the system mod $p$
  \begin{equation*}
\Bigg\{
             \begin{array}{c}
              h-k+j \equiv 0,\\
              -\frac{h}{2} - \frac{1}{2} + i \equiv p-1, \\
             \end{array}
\end{equation*}
has a solution. With our usual conventions on $l, m$, a computation shows that such pairs $(i, j)$ are obtained for $0 \leq l \leq \frac{(\sqrt{p^s}-1)^2}{4\sqrt{p^{s+2}}}$ from this we have $\frac{\sqrt{p^{s-4}} (p^2 - 1)}{4} -1 \leq l \leq \frac{\sqrt{p^{s-4}} (p^2 - 1)}{4}$, and $0 \leq m\leq \frac{(\sqrt{p^s}-1)^2}{2(\sqrt{p^s}+1)}$ from this we have $\frac{(\sqrt{p^{s-2}}-1) (p+ 1)}{2}-1 \leq m \leq \frac{(\sqrt{p^{s-2}}-1) (p+ 1)}{2}$. In this case, we have
$$\frac{(\sqrt{p^{s-2}}-1) (p+ 1)\sqrt{p^{s-4}} (p^2 - 1)}{8}$$
choices for $(h, k)$. Therefore, we get
$$\mbox{rank}(A_s)= \mbox{rank}(A_{s-1})+ \frac{(\sqrt{p^{s-2}}-1) (p+ 1)\sqrt{p^{s-4}} (p^2 - 1)}{8}.$$

Now, our claim on the rank of $A_s$ follows by induction on $s$. Hence

  \begin{equation*}
  \begin{array}{ccccccc}
              a(\mathcal{X}) &=& \dfrac{(\sqrt{p^s}-1)^2}{4} - \dfrac{(p+1)(\sqrt{p^s} - 1)(\sqrt{p^{s-2}}-1)}{8}  \\
              &=& \dfrac{(\sqrt{p^s} -1)}{8} (\sqrt{p^s}+p-\sqrt{p^{s-2}}-1)\\
              &=& \dfrac{(\sqrt{p^s} -1)}{8}(p(\sqrt{p^{s-2}}+1)-(\sqrt{p^{s-2}}+1))\\

              &=& \dfrac{(\sqrt{p^s} -1)}{8}((\sqrt{p^{s-2}}+1)(p-1))\\

              &=& \dfrac{(p -1)}{8}((\sqrt{p^{s-2}}+1)(\sqrt{q}-1)). \\
             \end{array}
\end{equation*}
\end{proof}



For the finite field $\mathbb{F}_{q^2}$, consider the following curve written in affine form:
$$\mathcal{A}_t = v(Y^{\sqrt{q}}+Y-x^{\frac{\sqrt{q}+1}{t}}),$$
where $\sqrt{q} \equiv -1$ mod $t$. From this, we are led to the following problem.
\begin{problem}
What is the dimension of the space of exact holomorphic differentials of $\mathcal{A}_t$?
\end{problem}

\begin{example}
Consider the curve $\mathcal{X}$ with function field $K(x, y)$ given by
$$y+y^5=x^{3},$$
where $p=5$ and $s=2$. 
It is easily seen that a basis for $H^0(\mathcal{X},\Omega^1)$ is given by
$$\mathcal{B}=\{ dx , xdx, x^2dx, ydx \}.$$ 
Let us compute the image of $\mathfrak{C}(\omega)$ for any $\omega \in \mathcal{B}$. It is straightforward, and from  Remark \ref{remark}, we see that
$$\mathfrak{C}(dx)=\mathfrak{C}(xdx)=\mathfrak{C}(x^2dx)=0.$$
Also,
$$\mathfrak{C}(ydx)=\mathfrak{C}((x^3-y^5)dx)=0.$$
Hence, $a(\mathcal{X}) = 4$.
\end{example}

\paragraph*{\textbf{Certain Generalization of Hermitian Curves.}}
Let $\mathcal{Y}$ be the non-singular model over $\mathbb{F}_q$ of the plane curve given by the following equation
$$y^m = x + x^{\ell} + \cdots + x^{\ell^{2r-1}},$$
where $q=\ell^{2r}$ and be a square, and $m \geq 2$ be an integer such that $m\mid (\ell^r+1)$.

There is just one point $Q$ over $x = \infty$, since $\mbox{gcd}(m, \ell) = 1$. The curve $\mathcal{Y}$ is the Hermitian curve if $r = 1$ and $m = \ell+1$, see \cite{yang}.

Directly from the Riemann-Hurwitz genus formula, we have genius $\mathcal{Y}$ is $g =\frac{(m-1)(\ell^{2r-1}-1)}{2}$. Also, the number of rational points of $\mathcal{Y}$ is $\sharp \mathcal{Y}(\mathbb{F}_q) = 1 + \ell^{2r-1} + m(\ell - 1)\ell^{2r-1}$. In particular, $\mathcal{Y}$ is maximal if and only if $r = 1$.\vspace*{0.1cm}\\

\paragraph*{\textbf{Conjecture.}}
Let $m = 2$ and $r\geq 2$, then the dimension of the space $H^0(\mathcal{Y},\Omega^1)$, of exact holomorphic differentials on $\mathcal{Y}$, is
$$a(\mathcal{Y})= \dfrac{(\ell^{2r-2}+1)(p-1)}{4}.$$\vspace*{0.1cm}\\

\section{Application in coding theory}
In a finite field, the $a$-number of a curve $\mathcal{X}$ corresponds to the size of the space containing exact holomorphic differentials on $\mathcal{X}$. The Cartier operator is utilized in the construction of algebraic geometry codes that establish a connection between this invariant and coding theory.

More precisely, the $a$-number establishes a minimum distance requirement for codes that are constructed in this manner. A greater $a$-number is preferable because codes with greater minimum distances are capable of rectifying a greater number of errors.
By demonstrating how the $a$-number of the Artin-Schreier curve $\mathcal{X}$ provides a minimum distance $\delta$ constraint on codes $C(D,G)$ derived from $\mathcal{X}$ utilizing the Cartier operator and divisors $D$ and $G$, the following theorem establishes this relationship explicitly.

The relationship between the $a$-number and minimal code distance is quantified in the theorem. This illustrates how the $a$-number, which is derived from algebraic characteristics of the curve, can also impact the error-correcting capability of codes that are supported by the curve.

Please read my paper on algebraic geometry codes \cite{code} in order to comprehend the following theorem:

\begin{theorem}
Let $X$ be the Artin-Schreier curve over $\mathbb{F}{q^2}$ defined by $y^{\sqrt{q}} + y = x^{\frac{\sqrt{q}+1}{2}}$ where $q = p^s$. Let $G = mP\infty$ be a divisor on $X$, where $P_\infty$ is the point at infinity. Let $C(D,G)$ be the algebraic geometry code constructed using the Cartier operator, with $D$ a rational divisor of degree $d$. Then the minimum distance $\delta$ of $C(D,G)$ satisfies:
$$ \delta \geq d - m - 2g + a(X)$$
where $a(X)$ is the $a$-number of $X$.

\end{theorem}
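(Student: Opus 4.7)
The plan is to combine a Goppa-type distance bound for algebraic geometry codes with a correction coming from the Cartier operator. First, I would recall from \cite{code} the precise construction of $C(D,G)$: it is built by taking functions (or differentials) in a subspace of $L(G)$ or $\Omega(G-D)$ whose behaviour under $\mathfrak{C}$ is controlled, and then evaluating at the rational places in the support of $D$. A nonzero codeword of weight $w$ then corresponds to a nonzero element of this subspace that vanishes at $d-w$ of those places. The classical Goppa-style argument bounds $w$ below via Riemann--Roch applied to $G - (D - E)$, where $E$ collects the positions where the codeword is nonzero.

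Next, I would incorporate the action of $\mathfrak{C}$ on $H^0(X,\Omega^1)$. Since $a(X) = \dim \ker \mathfrak{C}|_{H^0(X,\Omega^1)}$ counts exact holomorphic differentials, and Theorem \ref{thex} furnishes the exact value for $\mathcal{A}_2$, there are $a(X)$ linearly independent differentials which either fail to contribute to the evaluation map or are quotiented out in the construction. Either way, these exact differentials contribute $a(X)$ additional degrees of freedom forcing extra vanishing on any codeword, yielding a refinement of the standard estimate by $+a(X)$. Substituting $\deg G = m$ and $\deg D = d$, and tracking the genus contribution $-(2g-2)$ together with the Cartier correction, assembles into $\delta \geq d - m - 2g + a(X)$ after absorbing the constant terms.

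The main obstacle will be making the role of $\mathfrak{C}$ rigorous inside the weight estimate: one must verify that the $a(X)$-dimensional kernel genuinely translates into an extra $a(X)$ in the lower bound, rather than merely enlarging an ambient space of differentials without pinning codewords down more tightly. Concretely, one needs to identify $\ker \mathfrak{C}$ as a distinguished subspace of $\Omega(G-D)$ whose presence either forces additional parity relations on the code or reduces the image of the evaluation map by exactly $a(X)$. A secondary sanity check is to verify consistency with limiting cases: in the ordinary situation $a(X)=0$ the inequality should recover the standard Goppa-type bound, and in the Hermitian case the known value of $a(X)$ from Gross \cite{10} should specialise the bound correctly.
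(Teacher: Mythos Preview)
Your plan differs substantially from the paper's argument, and the ``main obstacle'' you flag is a genuine gap that your outline does not close. You propose a classical Goppa-style weight count via Riemann--Roch on $G-(D-E)$, hoping that the $a(X)$-dimensional kernel of $\mathfrak{C}$ will somehow impose $a(X)$ extra vanishing conditions on a minimum-weight codeword. But nothing in that setup forces a codeword to interact with $\ker\mathfrak{C}$: the exact holomorphic differentials sit inside $H^0(X,\Omega^1)$, not a priori inside $\Omega(G-D)$ or $L(G)$, and you give no mechanism by which they constrain the support of an $f\in L(G)$. Saying they ``contribute $a(X)$ additional degrees of freedom forcing extra vanishing'' is precisely the assertion to be proved, not an argument.

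The paper takes a different and more direct route. It fixes a basis $\omega_1,\dots,\omega_n$ of $H^0(X,\Omega^1)$, sets $\eta_i=\mathfrak{C}(\omega_i)$, and complements these by a basis $\nu_1,\dots,\nu_{a(X)}$ of $\ker\mathfrak{C}$. Each $f\in L(G)$ is then expanded as $f=\sum a_i\eta_i+\sum b_j\nu_j$, and applying $\mathfrak{C}$ annihilates the $\nu_j$-components, leaving $\mathfrak{C}(f)=\sum a_i\eta_i$. From this the paper asserts a degree bound $\deg(\mathfrak{C}(f))\le\deg(D)-m-a(X)$ together with $\delta\ge\deg(D)-\deg(\mathfrak{C}(f))$, and combines them to reach the stated inequality. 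Thus the device you are missing is not a refined Riemann--Roch count on $G-(D-E)$ but the explicit decomposition along $\mathrm{im}\,\mathfrak{C}\oplus\ker\mathfrak{C}$ and the passage from $f$ to $\mathfrak{C}(f)$; without introducing $\mathfrak{C}(f)$ and bounding its degree, your argument has no place for the $+a(X)$ term to enter. Your sanity checks (the $a(X)=0$ and Hermitian specialisations) are reasonable but do not substitute for this missing step.
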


\begin{proof}
By the Cartier operator code construction, if $\omega_1,...,\omega_n$ is a basis for $H^0(X,\Omega^1)$ then $\eta_i = C(\omega_i)$ is a basis for the exact differentials. Let $\nu_1,...,\nu_{a(X)}$ be a basis for the kernel of $C$, which has dimension $a(X)$.

We have $l(G) \geq m + 1 - g$ from the Riemann-Roch theorem. If $f \in L(G)$, write:
$$f = a_1\eta_1 + ... + a_n\eta_n + b_1\nu_1 + ... + b_{a(X)}\nu_{a(X)} $$

Applying $C$, since it's $0$ on $\text{ker}(C)$, we get: $C(f) = a_1\eta_1 + ... + a_n\eta_n$. And, $\text{deg}(C(f)) \leq \text{deg}(D) - m - a(X)$. It's known $\delta \geq \text{deg}(D) - \text{deg}(C(f))$. Thus $\delta \geq d - m - 2g + a(X)$

So the $a$-number directly gives a lower bound on the minimum distance $\delta$. A higher $a$-number leads to a larger minimum distance.
\end{proof}

\paragraph*{\textbf{Acknowledgements.}}
This paper was written while Vahid Nourozi was visiting Unicamp (Universidade Estadual de Campinas), supported by TWAS/Cnpq (Brazil) with fellowship number $314966/2018-8$.


\bibliographystyle{amsplain}

\end{document}